\documentclass[draftclsnofoot,onecolumn, 12pt]{IEEEtran}
\usepackage{cite}
\usepackage{color}
\usepackage{stfloats}
\usepackage{amssymb}
\usepackage{amsmath}
\usepackage{amsthm}
\usepackage{algorithmic}
\usepackage{graphicx}
\usepackage{textcomp}
\usepackage{color}
\usepackage{amssymb}
\usepackage{amsmath}
\usepackage{amsthm}
\usepackage{url}
\usepackage{caption}
\usepackage{subcaption}
\usepackage{hyperref}
\usepackage{geometry}
\usepackage{setspace}
\usepackage{authblk}
\usepackage{csquotes}
\usepackage{etoolbox}
\newcommand{\bb}[1]{\mathbf{#1}}

\newcommand{\mm}[1]{\mathrm{#1}}

\newcommand{\Rmnum}[1]{\expandafter\@slowromancap\romannumeral #1@}

\newtheorem{theorem}{Theorem}
\newtheorem{remark}{Remark}
\newtheorem{lemma}{Lemma}

\newtheorem{proposition}{Proposition}

\newtheorem{cor}{Corollary}
\usepackage{etoolbox}

\voffset 1.35mm
\hoffset -4mm
\textheight 21 cm
\textwidth 16.5 cm

\begin{document}

\title{\huge Performance Comparison Between A Simple Full-Duplex Multi-Antenna  Relay And A Passive Reflecting Intelligent Surface}
\author{Armin Bazrafkan, Marija Poposka, Zoran Hadzi-Velkov, Petar Popovski, and  Nikola Zlatanov
\thanks{A. Bazrafkan and N. Zlatanov were with the Department of Electrical and Computer Systems Engineering, Monash University, Melbourne, VIC 3800, Australia (e-mails: armin.bazrafkan@monash.edu and nikola.zlatanov@monash.edu)}
\thanks{M.  Poposka and Z. Hadzi Velkov are with the Department of Electrical Engineering and Information Technologies, Ss. Cyril and Methodius University, 1000
Skopje, Macedonia. (e-mails: poposkam@feit.ukim.edu.mk and zoranhv@feit.ukim.edu.mk)}
\thanks{P. Popovski is with the Department of Electronic Systems, Aalborg University, Denmark.   (e-mail: petarp@es.aau.dk)}
}

\maketitle

\begin{abstract}
In this paper, we propose to investigate a  single RF chain multi-antenna full-duplex (FD) relay built with $b$-bit analog phase shifters and passive self-interference cancellation. 
Next, assuming only passive self-interference cancellation at the FD relay, we derive the achievable data rate of a system comprised of  a source, the proposed FD relay, and a destination. We then compare the achievable data rate of the proposed FD relaying system with the achievable data rate of the same system but with the FD relay   replaced by an ideal passive RIS.
Our results show that the proposed  relaying system with 2-bit quantized analog phase  shifters  significantly outperforms the  RIS-assisted  system.  In fact, the performance  gains are so large, at least for small to intermediate numbers of antenna elements, that we believe it makes this result of interest to the wireless community. 

The proposed FD relay can also be built with reconfigurable holographic surfaces, one surface for the transmit-side and one for the receive-side. For such a scenario, we derive the energy efficiency of the relay-assisted system and compare it with the RIS-assisted system.  Our numerical results show that the energy efficiency of the relay-assisted system built with reconfigurable holographic surfaces is significantly higher than the energy efficiency of the RIS-assisted system.

Intuitively, the RIS system  is at a disadvantage since there the total transmit power $P_T$ is used entirely by the source, whereas  in the FD relaying system the total transmit power $P_T$ is shared by the source and the FD relay in addition to the noise-cleansing process performed by the decode-and-forwarding at the FD relay.
\end{abstract}

\begin{IEEEkeywords}
Full-duplex relay,  reflecting intelligent  surface, performance comparisons.
\end{IEEEkeywords}

\section{Introduction}
\IEEEPARstart{T}{o}  meet the ever-growing demands for wider bandwidths, the emerging wireless communication standards need to  work in  higher frequency  bands, such as the millimeter (mmWave, 30-100 GHz) and the sub-millimetre (above 100 GHz) bands. Wireless transmission in these bands typically necessitates either a direct line-of-sight (LoS) between the transmitter and the receiver, or an intermediary device with a LoS to both the transmitter and the receiver. The intermediary device may be a conventional relay or a recent alternative known as an  reflecting intelligent surface (RIS) \cite{bib1, 9490259}. An RIS is an electronic surface comprised of reflecting meta elements,  where each element can    reflect the incoming electromagnetic wave and shift its phase such that the overall reflected signal from the RIS is beamformed towards the  desired direction. In that sense, the RIS resembles a full-duplex (FD) amplify-and-forward  relay with a large planar antenna array. 

The recent influx of research papers on RIS-aided communications is attributed to the perceived advantages of this technology over conventional relaying, see \cite{bib3} for example. Although the perception of an RIS outperforming conventional relaying is valid for half-duplex (HD) relaying, as shown in  \cite{bib16}, it is yet unclear whether it is also valid for a practical multi-antenna FD relay. Certainly, this perception is not valid for an ideal FD relay that exhibits zero self-interference and has the same number of antenna elements as the RIS. However, an ideal FD relay is impractical to build in the real world since it would require the same number of RF chains as the number of antenna elements and   sophisticated self-interference cancellation hardware. The natural question in this case is: whether   an RIS can outperform a practical FD relay with the same number of antenna elements as the RIS but built only with one RF chain on the receive side and one RF chain on the transmit side? The aim of this paper is to answer this question. To the best of authors' knowledge, the performance of a single RF chain multi-antenna FD relay has not been compared to the performance of a passive RIS-assisted system yet, and therefore it is not known which of the two systems has a better performance.

Multi-antenna array is the main technology that can significantly increase the spectral efficiency  of mmWave communication systems \cite{6798744, 6736761}. One of the most practically-plausible solutions for building   multi-antenna arrays that exhibit  low power consumption and low complexity  is by adopting analog beamforming \cite{6736750, 8030501, 7389996, 8371237}. Analog beamforming can be  implemented by equipping each antenna element with an analog phase shifter, which can shift the phase of the transmitted/received signal  by  the desired phase, and thereby enable the multi-antenna array to perform transmit/receive beamforming to/from the desired direction in space. In general, there are many different technologies for building phase shifters such as reflective, loaded line, switched delay, Cartesian vector modulator, LO-path phase shifter, and phase-oversampling vector modulator \cite{7370753}. Moreover,  this is a very active field of research, with more-advanced phase shifting technologies being constantly invented. One such technology has been recently proposed in \cite{9318487}, where the authors  propose a low-cost 2-bit phase shifter built using pin-diodes. Having in mind that the reflecting elements at the RIS are also build with pin-diodes \cite{8910627}, it can be concluded that the 2-bit phase shifters at the FD relay have a comparable cost with the reflecting elements at the RIS. Moreover, a FD relay with separated transmit and receive  multi-antenna arrays built with 2-bit phase shifters that only employs passive self-interference suppression has a much lower design complexity and implementation cost as that build with an active self-interference cancellation circuitry. Recently, new antenna technologies have been proposed for analogue beamforming such as  {\it reconfigurable holographic surfaces}, \cite{9110848, 9690474, 9136592, 9826717, 9324910, bib23, deng2021reconfigurable},  reconfigurable reflectarrays, \cite{8023752, 6648436, 8485924}, and reconfigurable transmitarrays \cite{5422701}. Similarly to the passive RIS, these antenna types are based  on programmable metamaterials and therefore the relay would have a comparable size and power consumption as the RIS. 
Finally, relaying with large antenna arrays also has the potential to mitigate the self-interference in FD relays \cite{7929405, 6832435}. All of this motivates the work in this paper.

The main contributions of this paper are as follows. In this paper, we first propose to investigate a simple single RF chain multi-antenna FD relay implemented with $b$-bit analog phase shifters, where  two different antenna arrays are used for transmission and reception, respectively, with passive self-interference suppression between them.  Next,  we derive an achievable data rate when the proposed FD relay is employed to relay the signal between a source and a destination. Finally, we compare the achievable data rate of the proposed FD relaying system with the achievable data rate of the same system but with the FD relay  replaced by an ideal passive RIS. Our results show that the proposed  relaying system with 2-bit quantized analog phase  shifters  significantly outperforms the  RIS-assisted  systems. In fact, the performance  gains are so large, at least for small to intermediate numbers of antenna elements, that we believe it makes this result of interest to the wireless community. 
Next, 
the proposed FD relay can also be built with reconfigurable holographic surfaces, one surface for the transmit-side and one for the receive-side. For such a scenario, we derive the energy efficiency of the relay-assisted system and compare it with the RIS-assisted system.  Our numerical results show that the energy efficiency of the relay-assisted system built with reconfigurable holographic surfaces is significantly higher than the energy efficiency of the RIS-assisted system.


We note that we compare the proposed FD relay with a passive RIS and not with an active RIS. The main reason for this is the comparison of the performances between an active RIS and a passive RIS in \cite{9530750}, where it was shown that an active RIS outperforms the passive RIS only for a small number of reflective elements. When an RIS has a medium to a large number of reflective elements, the passive RIS significantly  outperforms the active RIS due to the amplification of noise at the active RIS, please see \cite{9530750}.

This paper is organized as follows. In Sec.~\ref{sec-sm}, we provide the system, relay, and channel models. In Sec.~\ref{sec-ar} , we provide the achievable data rates and energy efficiencies of the relay and RIS systems. In Sec.~\ref{sec-nr}, we provide numerical results and Sec.~\ref{sec-c} concludes the paper.

\section{System, FD Relay, and Channel Models}\label{sec-sm}

In the following, we present the system model, the model of the proposed FD relay, and the channel models. 

\subsection{System Model}
The system model is comprised of a single-antenna source transmitter, $S$, a single-antenna destination receiver, $D$, and an intermediary device that facilitates the transmission between the source and the destination. The intermediary device can either be  a FD relay, as shown in Fig.~\ref{fig_relay}, or an RIS, as shown in Fig.~\ref{fig_irs}. The FD relay and the RIS are assumed to employ the same number of antenna elements. Next, both systems are assumed to use identical total transmit power, denoted by $P_T$. In the case of the RIS system, the total power $P_T$ is used entirely by the source, whereas in the FD relaying system the total power $P_T$ is shared between the source, which uses power $P_S$, and the relay, which uses power $P_R$, such that $P_S+P_R=P_T$ holds.

 We assume that the source and the destination cannot communicate directly due to physical obstacles, whereas the intermediary device has unobstructed LoS to both the source and the destination. As a result, the communication between the source and the destination must be conducted via the intermediary device, the FD relay or the RIS.

\begin{figure}
\centering
\includegraphics[width=1\linewidth]{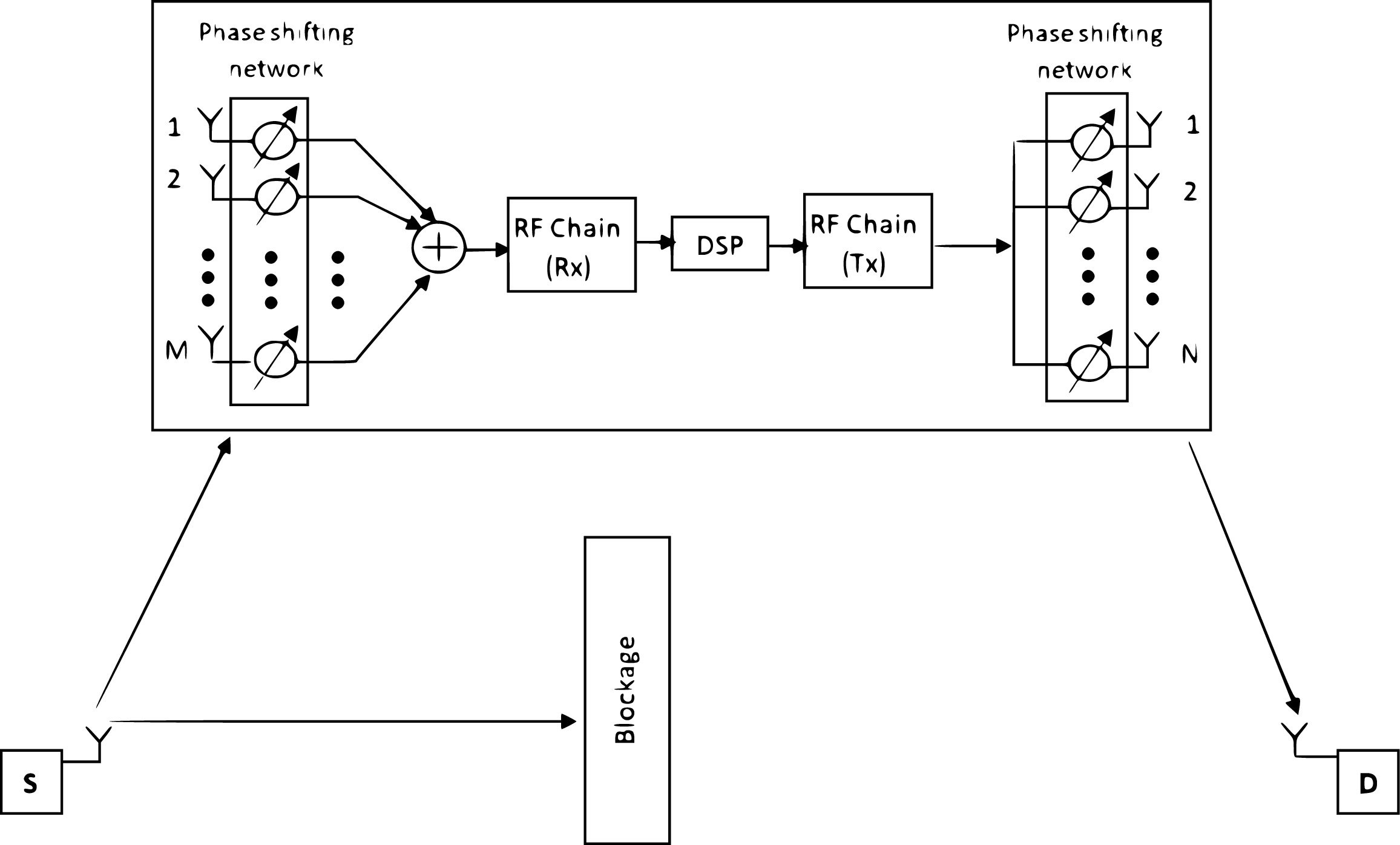} 
\vspace{0mm}
\caption{Model of the full-duplex relay-assisted communications system.} 
\label{fig_relay} 
\end{figure} 

\begin{figure}
\centering
\includegraphics[width=1\linewidth]{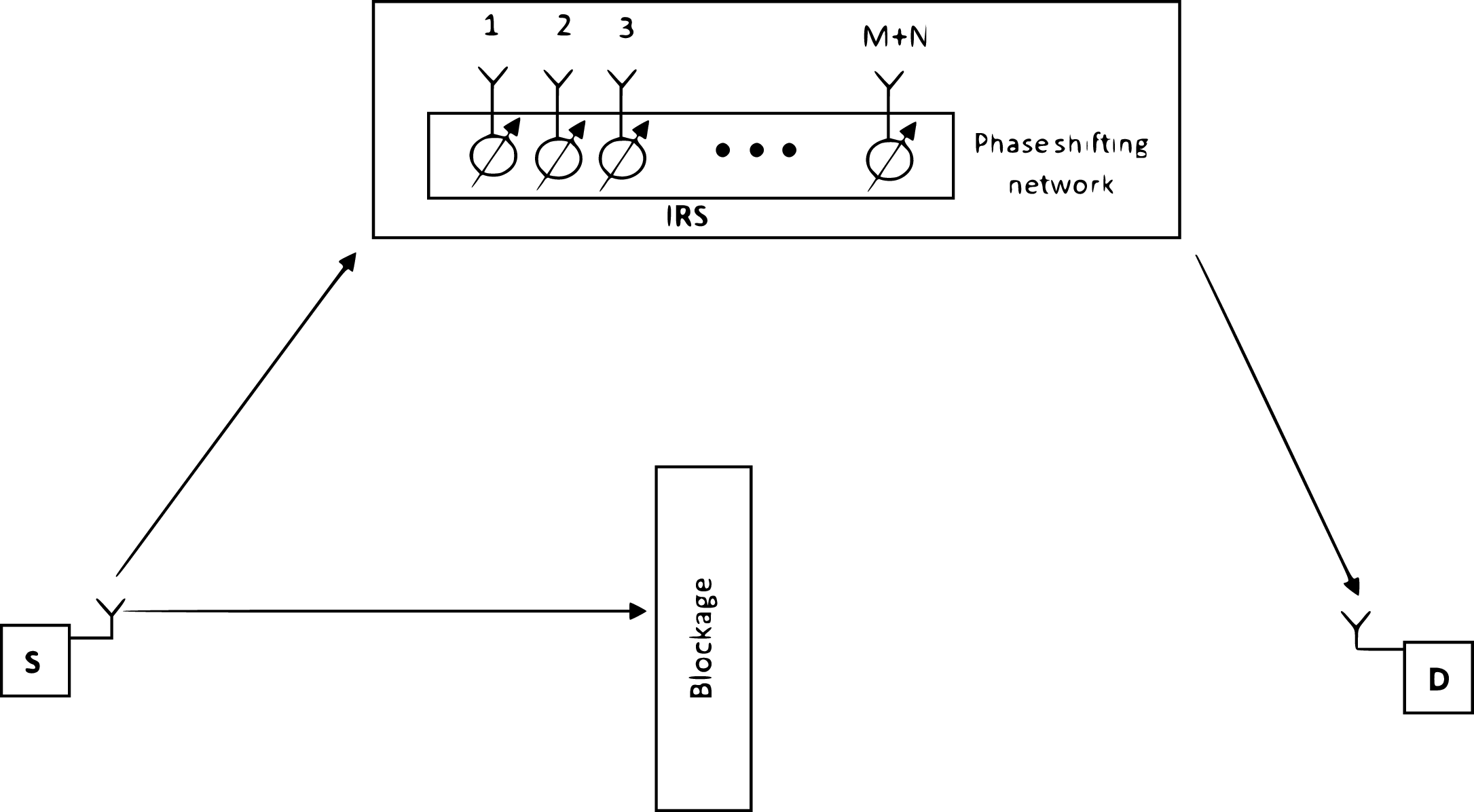} 
\vspace{0mm}
\caption{Model of the RIS-assisted communications system.} 
\label{fig_irs} 
\end{figure}

\subsection{Model of the Proposed FD Relay} 
The relay in the relay-assisted communications system, depicted in Fig.~\ref{fig_relay}, is a multi-antenna FD decode-and-forward relay constructed as follows. The relay is comprised of two mutually isolated planar antenna arrays: ($i$) a receiving planar array with $M$ antennas, and ($ii$) a transmitting planar array with $N$ antennas, where $M+N=K$ and $K$ is the total   number of antenna elements at the FD relay. The receiving and transmitting planner arrays are only passively isolated, hence, there is no active self-interference cancellation  used.

As can be seen  from the structure of the proposed FD relay  illustrated in Fig.~\ref{fig_relay}, the receive-side of the FD relay is comprised of  $M$ antennas, each connected to a $b$-bit analog phase shifter. The $b$-bit analog phase shifter can shift  the phase of the received signal at the given antenna element by a phase   from the following set 
\begin{align}\label{eq_P}
  \mathcal{P}=\Big\{ 0,\;\frac{2\pi}{2^b},\; 2\times\frac{2\pi}{2^b},\; 3\times\frac{2\pi}{2^b},  \cdots,\; (2^b -1)\times \frac{2\pi}{2^b} \Big\},  
\end{align}
where $b=1,2,3...$.
  Next, the phase shifted signals from each receive antenna  are summed via an analog combiner and sent to the receive RF chain. 
  The receive RF chain obtains a baseband digital representation of its input signal, which is then sent to a digital signal processing (DSP) unit for further  processing in the digital domain.
  
On the transmit-side of the proposed FD relay, as illustrated in Fig.~\ref{fig_relay}, a digital transmit signal is sent to a  transmit RF chain, which  produces a passband signal which is then fed into the $N$ transmit antennas.  Each transmit antenna is equipped with  a $b$-bit analog phase shifter which shifts the phase of the transmit signal  by  a phase  from the set $\mathcal P$ in \eqref{eq_P}. The advantage of this transmit-side design is that only one common power amplifier is required for all transmit antennas.

\subsection{Other Methods For Building the Proposed FD Relay}
The design in Fig.~\ref{fig_relay} is a fairly standard  hardware design of   multi-antenna array systems with analog beamforming \cite{6736750, 8030501, 7389996, 8371237}.  However, the large-scale conventional phased antenna arrays may have inherent implementation limitations due to the power consumption and physical size. These limitations can be circumvented by available antenna technologies, such as, {\it reconfigurable holographic surfaces}, \cite{9110848, 9690474, 9136592, 9826717, 9324910, bib23, deng2021reconfigurable},  reconfigurable reflectarrays, \cite{8023752, 6648436, 8485924}, and reconfigurable transmitarrays \cite{5422701}. Similarly to the passive RIS, these antenna types are based  on programmable metamaterials and therefore the relay would have a comparable size and power consumption as the RIS. 

In terms of data rate performances, all of these types of antenna arrays are similar. As a result, we will concentrate on deriving the data rate of the proposed FD relay built with conventional antenna arrays. However, when deriving the energy efficiency of the relay-assisted system,  we will concentrate on the relay built with reconfigurable holographic surfaces, for which power consumption models  are already available in the literature e.g \cite[Sec-V]{9826717}.

\subsection{Channel Models}

Let the channel between the source and the receive-side of the FD relay be represented by the vector $\mathbf{h}_{S} = [h_{S,1}, h_{S,2} \cdots, h_{S,M}]^T$, where $h_{S,m}$ denotes the channel between the source and the relay's $m$-th receive antenna. Let the channel between the transmit-side of the FD relay and the destination  be represented by the vector $\mathbf{h}_{D} = [h_{D,1}, h_{D,2}\cdots, h_{D,N}]^T$, where $h_{D,n}$ denotes the channel between the relay's $n$-th transmit antenna and the destination. The channels $h_{S,m}$ and $h_{D,n}$ are conventionally modelled as  
\begin{align}
h_{S,m} &= \sqrt{\Omega_{S,m}} \, e^{j \phi_{S,m}} \\
h_{D,n}& = \sqrt{\Omega_{D,n}} \, e^{j \phi_{D,n}},
\end{align}
where $\Omega_{S,m}$ and $\phi_{S,m}$ denote the  power gain and phase of the channel between the source and the relay's $m$-th receive antenna, whereas $\Omega_{D,n}$ and $\phi_{D,n}$ denote the  power gain and phase of the channel between the relay's $n$-th transmit antenna and the destination. In mmWave bands, the channel  is  comprised of one very strong LoS component and few very weak non-LoS components. Thereby, each channel can be accurately modeled as static with $\Omega_{S,m}$, $\Omega_{D,n}$, $\phi_{D,n}$ and $\phi_{S,m}$  fixed\footnote{We note that in the case of Rayleigh fading, the channel parameters are not fixed anymore. For completeness, Rayleigh fading will also be investigated in this paper}. We assume that the source and the destination are located in the far-field region of the relay, as defined in \cite{bib19}, and therefore, the corresponding channel power gains are given by \cite{bib19}
\begin{equation}
    \Omega_{S,m} = \Omega_{S}, \forall m, 
\end{equation}
\begin{equation}
    \Omega_{D,n} = \Omega_{D}, \forall n. 
\end{equation}

\subsection{Relay's Self-Interference} 

Due to the FD operation of the proposed relay, there is  self-interference between the receive antenna array and the transmit antenna array. Despite the passive isolation, the relay's receiving antenna array is still exposed to   residual self-interference from its transmit antenna array. In the following, we will derive the maximum amount of possible self-interference at the proposed FD relay.

Let $\mathbf{G} = [g_{mn}]_{M\times N}$ denote the self-interference channel matrix between the relay's transmit array and the relay's receive array, where $g_{mn}$ is the self-interference  channel between the $n$-th transmit antenna and the $m$-th receive antenna of the relay. As shown in \cite{7817779}, the   worst-case scenario with respect to the achievable data rate of the proposed FD relaying system is when $g_{mn}$, $\forall m$ and $\forall n$, are  independent and identically distributed (i.i.d.)   zero-mean Gaussian  random variables. In this paper, we adopt the worst-case scenario in terms of achievable rate for the proposed FD relaying system by assuming that $g_{mn}\sim \mathcal{CN}(0,\sigma_I^2)$, where $\sigma_I^2$ is the  power gain of the self-interference channel between the $n$-th transmit antenna and the $m$-th receive antenna at the relay.  
In the following lemma, we obtain an upper bound on $\sigma_I^2$, i.e., the highest possible value for the power gain of the  self-interference channel.

\begin{lemma}\label{lema_1}
In the absence of passive self-interference cancellation at the proposed FD relay, the power gain of the self-interference channel between any transmit antenna and any receive antenna at the proposed FD relay is upper bounded by 
\begin{align} \label{eqmain}
 \sigma_I^2\leq \frac{1}{M}.
 \end{align}
\end{lemma}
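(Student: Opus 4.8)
The plan is to obtain \eqref{eqmain} not from any distributional property of the Gaussian model, but from a physical energy-conservation (passivity) argument applied to the relay's receive aperture. The governing intuition is that the receive antenna array is a passive collector: it cannot absorb more electromagnetic power than the transmit array radiates into the surrounding medium. This single constraint, combined with the modelling assumption that every entry $g_{mn}$ shares the common second moment $\sigma_I^2 = \mathbb{E}[|g_{mn}|^2]$, is enough to pin down the stated bound, with the factor $M$ arising from the summation over the $M$ receive elements.

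First I would run a thought experiment that isolates a single transmit antenna. Suppose transmit antenna $n$ alone radiates a signal of power $P$, with all other transmit antennas silent. By the definition of the self-interference channel, the signal induced at receive antenna $m$ is $g_{mn}$ times the transmitted signal, so the power collected at receive antenna $m$ equals $|g_{mn}|^2 P$, and the total power collected across the entire receive array equals $P\sum_{m=1}^{M}|g_{mn}|^2$. Since the receive array is, at best, lossless and passive, conservation of energy forces the collected power to be no larger than the radiated power $P$, yielding the deterministic per-antenna constraint $\sum_{m=1}^{M}|g_{mn}|^2 \leq 1$ for every $n$. This is the crux of the argument and the step where the physics enters.

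Second I would pass to expectations. Taking $\mathbb{E}[\cdot]$ of the inequality $\sum_{m=1}^{M}|g_{mn}|^2 \leq 1$ and using $\mathbb{E}[|g_{mn}|^2] = \sigma_I^2$ for all $m$ gives $M\sigma_I^2 \leq 1$, i.e. $\sigma_I^2 \leq 1/M$, which is exactly \eqref{eqmain}. The same conclusion follows from a global accounting: if all $N$ transmit antennas each radiate power $P$, the total radiated power is $NP$ and the total collected power $P\sum_{m=1}^{M}\sum_{n=1}^{N}|g_{mn}|^2$ cannot exceed it, so $\sum_{m,n}|g_{mn}|^2 \leq N$, and taking expectations yields $MN\sigma_I^2 \leq N$, the identical bound.

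I expect the main obstacle to be the careful justification of the energy-conservation step together with its normalization, rather than the elementary statistical step that follows. One must state precisely what ``power gain'' means for $g_{mn}$, namely that $|g_{mn}|^2$ is the dimensionless fraction of the transmitted power captured at a receive element, and argue that the worst case for the achievable rate of the proposed FD relaying system corresponds to the receive array capturing the maximum admissible fraction of the radiated energy. Establishing that this passivity inequality is the correct and tight characterisation of the ``highest possible'' self-interference, consistent with the i.i.d.\ worst-case assumption adopted above, is the delicate part; once that is granted, the bound $\sigma_I^2 \leq 1/M$ is immediate.
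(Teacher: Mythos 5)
Your proposal is correct and takes essentially the same approach as the paper: the paper likewise invokes conservation of energy, computing the average self-interference power collected by the receive array as $P_I = \frac{P_R}{N} E\{|\sum_{m=1}^{M}\sum_{n=1}^{N} g_{mn}|^2\} = M P_R \sigma_I^2$ under equal power allocation across the $N$ transmit antennas, and imposing $P_I \leq P_R$ to conclude $\sigma_I^2 \leq 1/M$. The only cosmetic difference is that you state the passivity constraint deterministically per transmit antenna, $\sum_{m=1}^{M}|g_{mn}|^2 \leq 1$, before averaging, whereas the paper applies energy conservation directly to the expected received power under the i.i.d.\ zero-mean channel model.
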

\begin{proof}
Assuming the relay's transmit  power, $P_R$, is equally allocated among the $N$ transmit antennas, the average received power at the relay due to self-interference, denoted by $P_{I}$, is given by 
\begin{align}\label{eq1a}
P_{I}&=\frac{ P_R}{N}E\left\{\left|\sum_{m=1}^{M}\sum_{n=1}^{N}g_{mn}\right|^2\right\}=\frac{ P_R}{N} M N\sigma_I^2\nonumber\\
&= M P_R \sigma_I^2.
\end{align}
Due to the law of conservation of energy,  the average received power at the relay due to self-interference, $P_{I}$, must be smaller or equal to the relay's transmit power, $P_R$, i.e., $P_{I}\leq P_R$ must be satisfied. Combining \eqref{eq1a} with $P_{I}\leq P_R$ yields (\ref{eqmain}). 
\end{proof} 
We assume that   passive insulation is inserted between the relay's transmit and receive antenna arrays with a passive isolation coefficient $\eta \leq 1$. In this case, (\ref{eqmain}) is transformed into $\sigma_I^2 \leq \eta/M$. Note that the typical values of the passive isolation coefficient vary between $10^{-6}$ and $10^{-5}$, c.f. \cite{6702851}. In the rest of the paper, we pessimistically set 
\begin{align}
\sigma_I^2=\frac{\eta}{M},
\end{align}
which leads to the lowest possible achievable data rate of the proposed FD relaying system.

\subsection{Model of the Passive RIS} 

The RIS-assisted communications system is depicted in Fig.~\ref{fig_irs}. The RIS consists of $K$ reflecting elements and has a total area $B$. Following \cite{8936989, bib19, 9569465}, the RIS can be modelled as a planar antenna array, where each antenna element has an area of size $A$, such that $B = KA$ and $A \leq (\lambda/4)^2$. The phase shift of each RIS element can be adjusted such that the incoming signal from  the source is reflected and beamformed towards the destination. In order to facilitate the best possible performance of the RIS, we assume that each phase shifter at the RIS can be set to any value in the range $[0, 2\pi)$, i.e., opposite to the proposed FD relay, there is no phase quantization at the RIS, which maximizes the performance of the RIS. Note that the phase shifters at the RIS   constitute a phase shifting network \cite{bib17}. 

Similar to the relay, we assume that the channel  between $S$ and the $k$-th reflective element at the RIS is denoted by $h_{S,k}$, and the channel between the $k$-th reflective element at the RIS and the destination is denoted by  $h_{D,k}$. The channels $h_{S,k}$   and $h_{D,k}$, for $k=1,2,...,K$, are given by
\begin{align}
h_{S,k} &= \sqrt{\Omega_{S,k}} \, e^{j \phi_{S,k}} \\
h_{D,k}& = \sqrt{\Omega_{D,k}} \, e^{j \phi_{D,k}},
\end{align}
where $\Omega_{S,k}$ and $\phi_{S,k}$ denote the  power gain and phase of the channel between the source and the RIS's $k$-th reflective element, whereas $\Omega_{D,k}$ and $\phi_{D,k}$ denote the  power gain and phase of the channel between the RIS's $k$-th reflective element and the destination. Similar to the relaying system model,  each channel is modeled as static with $\Omega_{S,k}$, $\Omega_{D,k}$, $\phi_{D,k}$, and $\phi_{S,k}$ being  fixed $\forall k$. Moreover, again similar to  the relaying system, we assume that the source and the destination are located in the far-field region of the RIS, as defined in \cite{bib19}, and therefore, the corresponding channel power gains are given by \cite{bib19}
\begin{equation}
    \Omega_{S,k} = \Omega_{S}, \forall k, 
\end{equation}
\begin{equation}
    \Omega_{D,k} = \Omega_{D}, \forall k. 
\end{equation}

\section{Achievable Data Rates And Energy Efficiencies}\label{sec-ar}

In this section, we derive the achievable data rate of the proposed FD relaying system as well as provide the achievable data rate of  the RIS system.

\subsection{Data Rate of the Relay-Assisted System} 
Let $P_S$ denote the transmit power of the source and $P_R$ denote the transmit power of the relay, where $P_S+P_R=P_T$ must hold and $P_T$ is the total available power in the relaying system available for sharing between the source and the relay. Let $x_S$  denote the signal transmitted by the source, such that $E\{|x_S|^2\}=P_S$. Let $\bb {\bar y}_R=[\bar{y}_{R,1}, \bar{y}_{R,2},..., \bar y_{R,M}]^T$ denote the received signal vector at the relay, where $\bar{y}_{R,m}$ is the received signal at the $m$-th receive antenna of the relay before any phase shift is applied. The received vector $\bb {\bar y}_R$ is comprised of the following three components: ($i$) the signal from the source that arrives via the channel $\bb h_S$, given by $x_S\bb h_S$, ($ii$) the noise vector at the relay,  $\bb w_R$, and ($iii$) the self-interference vector, which is found in the following.

Let $x_R$ denotes the information signal of the relay, such that $E\{|x_R|^2\}=P_R$. Let $\bb v=[v_{1},v_{2},...,v_{N}]^T$ denotes the phase shift vector of the relay's transmit antenna array. The relay's $n$-th transmit antenna transmits the signal $x_R  v_n/\sqrt{N}$, where $v_n\in\mathcal{P}$ is the corresponding phase shift applied at this antenna. Thus, the self-interference vector at the relay is given by $x_R \bb G\bb v/\sqrt{N}$, whereas the received vector $\bb {\bar y}_R$ is given by 
\begin{align}\label{e5}
    \bb {\bar y}_R = x_S   \bb h_S+ \frac{1}{\sqrt{N}}x_R   \bb G\bb v  +   \bb w_R.  
\end{align} 
In (\ref{e5}), $\bb w_R =[w_{R,1},w_{R,2},...,w_{R,M}]^T$ denotes the complex additive white Gaussian noise (AWGN) vector at the relay, where $w_{R,m}\sim\mathcal{CN}(0,N_0)$, for $m=1,...,M$, is the complex AWGN at the $m$-th receive antenna of the relay.  

Next, let $\bb u=[u_{1},u_{2},...,u_{M}]^T$  denote the phase shift vector of the relay's receive antenna array, where $u_m\in\mathcal{P}$ is the phase shift at the relay's $m$-th receive antenna. Therefore, the received signal to the relay  is given by
\begin{align} \label{e6}
    y_R &= \bb u^T  \bb {\bar y}_R   \nonumber\\
    &= x_S  \bb u^T \bb h_S+ \frac{1}{\sqrt{N}}x_R  \bb u^T \bb G\bb v +  \bb u^T \bb w_R.
\end{align}

On the other hand, the received signal at the destination is given by 
\begin{align} \label{e7}
    y_D = \frac{1}{\sqrt{N}} x_R \bb v^T\bb h_D  + w_D , 
\end{align} 
where $w_D\sim\mathcal{CN}(0,N_0)$ denotes the complex AWGN at the destination. 

Having derived the input-output relationship of the proposed FD relaying system, we are now ready to derive its achievable data rate, which is provided in the following theorem.

\begin{theorem}
When the phase shift vectors  $\bb u$ and $\bb v$ are given by
\begin{align}
\bb u &=[e^{-j \hat \phi_{S,1}}, e^{-j \hat \phi_{S,1}},..., e^{-j \hat \phi_{S,M}}]^T\label{eq_u}\\
\bb v &=[e^{-j \hat \phi_{D,1}}, e^{-j \hat \phi_{D,1}},..., e^{-j \hat \phi_{D,N}}]^T\label{eq_v} ,
\end{align}
where $\hat \phi_{S,m}$ and $\hat \phi_{D,n}$ are the $b$-bit quantized versions of the channel phases $\phi_{S,m}$ and $\phi_{D,n}$, respectively, and are given by 
\begin{align}
    &\hat \phi_{S,m}=\frac{2m\pi}{2^b}
    \;\mm{if }\;\phi_{S,m}\in\left[\frac{2m\pi}{2^b},\frac{2(m+1)\pi}{2^b}\right),\nonumber\\ 
    &\hat \phi_{D,n}=\frac{2m\pi}{2^b}
    \;\mm{if }\;\phi_{D,n}\in\left[\frac{2m\pi}{2^b},\frac{2(m+1)\pi}{2^b}\right),\nonumber\\
    &\;\mm{for}\;m\in\left\{0,1,\cdots,2^b-1\right\},
\end{align}
   then the achievable data rate of the proposed FD relaying system is given by

\begin{align}\label{eeth1}
R=  &\min \left\{  \log_2\left(1+ \frac{P_S \Omega_S \left(1+(M-1)Q\right)}{N_0+\frac{\eta}{M} P_R} \right) , \right.
\nonumber\\
& \left. \log_2\left(1+\frac{P_R \Omega_D \left(1+(N-1)Q\right)}{N_0}\right)\right\},
\end{align}
where 
\begin{align}\label{eth1.1}
    Q=\left(\frac{2^b}{\pi}\sin\left(\frac{\pi}{2^b}\right)\right)^2.   
\end{align} 
\end{theorem}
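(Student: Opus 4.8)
The plan is to treat this as a standard decode-and-forward (DF) full-duplex relay, for which the end-to-end achievable rate is the minimum of the rate supported on the source-to-relay hop and the rate supported on the relay-to-destination hop. For each hop I would read off a signal-to-interference-plus-noise ratio (SINR) from the input-output relations \eqref{e6} and \eqref{e7}, treating the residual self-interference and the phase-quantization error statistically, and then invoke the Gaussian-input mutual information $\log_2(1+\mathrm{SINR})$. Concretely, I would write $R=\min\{R_{SR},R_{RD}\}$ with $R_{SR}=\log_2(1+\mathrm{SINR}_R)$ and $R_{RD}=\log_2(1+\mathrm{SINR}_D)$, so that the whole proof reduces to evaluating the two SINRs.

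First I would substitute the chosen beamformers $u_m=e^{-j\hat\phi_{S,m}}$ and $v_n=e^{-j\hat\phi_{D,n}}$ into the effective channel gains. On the first hop, $\mathbf{u}^T\mathbf{h}_S=\sqrt{\Omega_S}\sum_{m=1}^{M}e^{j\Delta_{S,m}}$, where $\Delta_{S,m}=\phi_{S,m}-\hat\phi_{S,m}$ is the error introduced by the $b$-bit phase shifter; by the definition of the quantizer this error lies in $[0,2\pi/2^b)$, and I would model it as uniform on that interval and independent across antennas. The desired-signal power is then governed by $E\{|\sum_m e^{j\Delta_{S,m}}|^2\}$, which I would expand into its $M$ diagonal terms (each equal to $1$) and its $M(M-1)$ off-diagonal terms (each equal to $|E\{e^{j\Delta}\}|^2$ by independence), giving $M+M(M-1)|E\{e^{j\Delta}\}|^2$.

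The crux of the calculation is the single-antenna moment $E\{e^{j\Delta}\}$ for $\Delta$ uniform on $[0,2\pi/2^b)$. I would evaluate the elementary integral $\frac{2^b}{2\pi}\int_0^{2\pi/2^b}e^{j\Delta}\,d\Delta$ and, using a half-angle factorization of $e^{j2\pi/2^b}-1$, reduce its squared modulus exactly to $Q=(\frac{2^b}{\pi}\sin(\frac{\pi}{2^b}))^2$ as in \eqref{eth1.1}. Substituting back yields an expected beamforming gain of $\Omega_S M(1+(M-1)Q)$ for the source signal at the combiner output. I would then assemble the SINRs: since the combiner weights have unit modulus, $\mathbf{u}^T\mathbf{w}_R$ has power $MN_0$, while $\mathbf{u}^T\mathbf{G}\mathbf{v}$ is a zero-mean sum of the independent Gaussian entries $g_{mn}$ of power $MN\sigma_I^2$, so the term $\frac{1}{\sqrt N}x_R\mathbf{u}^T\mathbf{G}\mathbf{v}$ contributes $M P_R\sigma_I^2=\eta P_R$ after inserting $\sigma_I^2=\eta/M$ from Lemma~\ref{lema_1}. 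Because this interference is a linear combination of Gaussians it is itself Gaussian and may be folded into the effective noise, so $\mathrm{SINR}_R=\Omega_S M(1+(M-1)Q)P_S/(MN_0+\eta P_R)$; dividing through by $M$ produces the first term of \eqref{eeth1} with $\Omega_{SR}=\Omega_S$. The second hop in \eqref{e7} has no interference, and the same diagonal/off-diagonal expansion gives signal power $\frac{1}{N}\Omega_D N(1+(N-1)Q)P_R$ against noise $N_0$, yielding the second term with $\Omega_{RD}=\Omega_D$; taking the minimum completes the proof.

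I expect the main obstacle to be the statistical modeling step, namely justifying that the deterministic quantization errors may be treated as i.i.d.\ uniform so that the off-diagonal expectations factor and collapse to $Q$, together with the argument that the Gaussian self-interference can be absorbed into the effective noise to license the $\log_2(1+\mathrm{SINR})$ rate expression. The remaining manipulations, including the exact evaluation of $Q$, are routine once this modeling is in place.
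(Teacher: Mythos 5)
Your proposal is correct and follows essentially the same route as the paper's own proof in Appendix~\ref{app1}: evaluate $\gamma_{SR}$ and $\gamma_{RD}$ from \eqref{e6} and \eqref{e7}, expand $|\mathbf{u}^T\mathbf{h}_S|^2$ into $M$ diagonal terms plus $M(M-1)$ cross terms whose i.i.d.\ uniform quantization errors factor into $|E\{e^{j\Delta}\}|^2=Q$, compute the self-interference power $\frac{P_R}{N}MN\sigma_I^2=\eta P_R$ with $\sigma_I^2=\eta/M$ and combiner noise power $MN_0$, and take the minimum of the two $\log_2(1+\mathrm{SINR})$ hop rates. The only (cosmetic) divergence is that you take the error uniform on $[0,2\pi/2^b)$, faithful to the floor quantizer in the theorem statement, whereas the paper asserts uniformity on the symmetric interval $[-\pi/2^b,\pi/2^b)$; since only the squared modulus of $E\{e^{j\Delta}\}$ enters the cross terms, both conventions yield the same $Q$.
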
 
\begin{proof}
Please refer to the Appendix \ref{app1}.
\end{proof}

Obviously, the data rate in \eqref{eeth1} can be maximized by optimizing the values of $M$, $N$, $P_S$, and $P_R$, given the constraints $M+N=K$ and $P_R+P_S=P_T$. Specifically, 
the maximum achievable data rate of the proposed relaying system  is obtained as
\begin{equation}\label{e24}
\begin{aligned}
 \max_{P_R,N} \hspace{2mm} & R\\
  \textrm{s.t.}\hspace{5mm} &  M+N=K\\
  &  P_S+P_R=P_T,
\end{aligned}
\end{equation}    
where the objective function  $R$ is given in \eqref{eeth1} and $P_T$ is the total available power shared between the source and the relay. 

Note that \eqref{e24} can be easily transformed into an unconstrained optimization problem, which  can then be solved numerically by conventional methods for optimization of convex functions, such as the steepest gradient descent method, or   Newton's method. However, this approach does not lead to a closed-form solution. In order to provide a closed-form solution of \eqref{e24}, we provide the following proposition.
 \begin{proposition}\label{cor1}
 We propose the following sub-optimal solution of \eqref{e24}, 
\begin{align}\label{x1.1}
M^*&=\frac{2K}{3},\\\label{x1.3}
N^*&=\frac{K}{3}, \\\label{x1.4}
    P^*_R&=\frac{-\alpha+\sqrt{48P_TKN_0\eta\Omega_{S}\Omega_{D}+\alpha^2}}{6\eta\Omega_{D}},\\\label{x1.2}
    P^*_S&=P_T-P^*_R,
\end{align}
where $ \alpha=2KN_0\Omega_{D}+4KN_0\Omega_{S}. $
 \end{proposition}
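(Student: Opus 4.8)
The plan is to decouple the joint problem \eqref{e24} into two sub-problems: first to fix the antenna split $(M,N)$ by a tractable heuristic, and then to optimize the power split $(P_S,P_R)$ exactly for that fixed split by balancing the two rate terms in \eqref{eeth1}. Throughout I would exploit the fact that, for a decode-and-forward relay, the end-to-end rate is the minimum of the source-to-relay rate $R_1$ and the relay-to-destination rate $R_2$, so that maximizing $R$ is a max--min problem in which the two arguments can be traded against each other through $P_R$ and $N$. I would treat $M$ and $N$ as continuous (relaxing the integer constraint), which is where the ``sub-optimal'' qualifier partly originates.

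For the antenna split I would first simplify the beamforming gains via $1+(M-1)Q\approx MQ$ and $1+(N-1)Q\approx NQ$, accurate in the massive-array regime $MQ,NQ\gg 1$. Under this approximation the two SINRs in \eqref{eeth1} satisfy $\mathrm{SINR}_1 \propto M^2/(N_0 M+\eta P_R)$ and $\mathrm{SINR}_2\propto N$, so their product is proportional to $M^2 N/(N_0 M+\eta P_R)$. Treating the denominator as dominated by the residual self-interference term $\eta P_R$ --- the asymmetry that makes the receive array bear the burden of self-interference suppression and hence breaks the symmetry between $M$ and $N$ --- the product reduces to a constant multiple of $M^2 N$. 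Maximizing $M^2 N = M^2(K-M)$ under $M+N=K$ by setting the derivative $M(2K-3M)$ to zero yields $M^*=2K/3$ and $N^*=K/3$, which are \eqref{x1.1}--\eqref{x1.3}.

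For the power split I would fix $M=M^*$, $N=N^*$ and optimize $P_R$ exactly. With $P_S=P_T-P_R$, the term $R_1$ is strictly decreasing in $P_R$ (its numerator shrinks while the interference in its denominator grows), whereas $R_2$ is strictly increasing in $P_R$; moreover $R_1>R_2$ at $P_R=0$ and $R_1<R_2$ at $P_R=P_T$. Hence the minimum of the two is maximized at the unique crossing point where $R_1=R_2$, i.e. where the two SINRs are equal. Imposing $\mathrm{SINR}_1=\mathrm{SINR}_2$ cancels the common factor $Q$ (this is why $Q$ is absent from \eqref{x1.4}), and substituting $M^*=2K/3$, $N^*=K/3$ turns the balance condition, after clearing denominators and multiplying through by $2K$, into the quadratic
\begin{align}
3\eta\Omega_D P_R^2 + \alpha P_R - 4P_T K N_0 \Omega_S = 0,
\end{align}
with $\alpha=2KN_0\Omega_D+4KN_0\Omega_S$. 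Taking its unique positive root gives \eqref{x1.4}, and $P_S^*=P_T-P_R^*$ follows immediately, establishing \eqref{x1.2}.

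I expect the main obstacle to be the antenna-split step rather than the power step. The balancing argument for $P_R$ is exact once $(M,N)$ is fixed and rests only on the clean monotonicity of $R_1$ and $R_2$ in $P_R$; by contrast, reducing the joint objective to $M^2 N$ relies on both the large-array approximation of the beamforming gains and the assumption that the self-interference term dominates the noise floor at the relay, and these are precisely the approximations responsible for sub-optimality. I would emphasize that the resulting ratio $M:N=2:1$ is independent of the channel gains and of $Q$, so the only channel- and power-dependent quantity left to pin down is $P_R^*$ through the quadratic above.
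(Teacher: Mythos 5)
Your proposal is correct and arrives at exactly the stated quantities, but it reaches the antenna split $M:N=2:1$ by a genuinely different route than the paper. The power-split half coincides with the paper's proof: the paper likewise starts from the balance condition $\gamma_{SR}=\gamma_{RD}$ (citing a known relaying result rather than spelling out your monotonicity-and-crossing argument, which you make explicit and which is a cleaner justification), uses the same approximations $1+(M-1)Q\approx MQ$ and $1+(N-1)Q\approx NQ$, fixes $M^*=2K/3$, $N^*=K/3$, substitutes $P_S=P_T-P_R$, and solves the same quadratic $3\eta\Omega_D P_R^2+\alpha P_R-4KP_TN_0\Omega_S=0$ for its positive root, matching \eqref{x1.4}. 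For the split itself, however, the paper proceeds quite differently: it maximizes $P_RN$ (proportional to the balanced SINR) subject to the balance constraint via Lagrange multipliers, obtains an overdetermined $4\times 3$ linear system in the multipliers, forces the bordered determinant to vanish to get a second quadratic in $P_R$, and then equates the approximate roots $\sqrt{c_i/\alpha}$ of the two quadratics---valid when $b_i^2\ll 4\alpha c_i$, i.e., when $\eta_0=\eta/N_0$ is large---to conclude $c_1=c_2$ and hence $M=2N$. Your route instead maximizes the product of the two SINRs, reduces it to $M^2N/(N_0M+\eta P_R)$, and then to $M^2(K-M)$ under the domination assumption $\eta P_R\gg N_0M$; notably, this is the same interference-limited regime underlying the paper's $b_i^2\ll 4\alpha c_i$ approximation (both amount to neglecting $N_0$ against $\eta P_R/M$ in the relay's denominator), so the two heuristics are of equal strength and equally ``sub-optimal'' in the sense of the proposition. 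Yours is considerably more elementary---one-variable calculus on $M^2(K-M)$ versus a Lagrangian-plus-determinant argument---while the paper's buys a stationarity-based derivation that does not presuppose the product surrogate. The only caveat on your side is that maximizing the SINR product is equivalent to maximizing the min only at the balanced operating point, and your decoupling of the $M$-optimization from $P_R$ holds only under the domination assumption; since you flag precisely this as the source of sub-optimality, and the paper's own derivation is no tighter (it even concedes that its intermediate closed form \eqref{ee18} can exceed $P_T$ and falls back on re-solving the exact balance quadratic with $M^*,N^*$ fixed, which is exactly your final step), this is not a gap.
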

\begin{proof}
Please refer to the Appendix \ref{app2}.
\end{proof}

\begin{remark}
  The channel state information (CSI) required for obtaining the phase shift vectors  $\bb u$ and $\bb v$ as per \eqref{eq_u} and \eqref{eq_v} is the $b$-bit quantized information of the phases of the source-relay and relay-destination channels, respectively. This $b$-bit quantized channel phase information can be acquired directly in the analog domain as per \cite{bib20, 8822634}, or by adapting any of the CSI acquisition  methods developed for RIS-assisted networks such as in \cite{9366805}. In that sense, the relay system does not waste more resources in   acquiring the CSI than the corresponding RIS system.   
\end{remark}

\subsection{Rate of the RIS-Assisted System} 

Let $P_T$ denote the transmit power of the source\footnote{Note that the  relaying system also consumes power $P_T$, however, the power $P_T$ is shared between the source and the relay.}. Assuming static LoS channels from the source to the RIS and from the RIS to the destination, the achievable data rate between the source and destination, $ R_{RIS}$, is given by \cite[Eq. (48)]{bib19} 
\begin{equation}\label{eq_r_irs}
R_{RIS}=\log_2\left(1+\frac{K^2 P_T\Omega_{S} \Omega_{D}}{N_0}\right) . 
\end{equation}

\subsection{Rates in Rayleigh Fading Channels} 

The data rates for the relaying system in \eqref{eeth1} and the RIS system in \eqref{eq_r_irs} hold for LoS and static fading. In the following,  for completeness, we provide the data rates for Rayleigh fading channels.

The corresponding expressions for the achievable data rates of the relaying and RIS systems in Rayleigh fading channels are determined by the following corollaries. 

\begin{cor}\label{corrr1}
In Rayleigh fading, the achievable data rate of the proposed relay-assisted communications system, $R^{\textrm Ra}$, is given by
\begin{align}\label{q3.00}
R^{\textrm Ra}=\min\left\{ \log_2\left(1+\gamma_S^{Ra}\right),\log_2\left(1+\gamma_D^{Ra}\right)\right\},
\end{align}
where 
\begin{align}\label{eq3.001}
\gamma_S^{Ra}&=\frac{P_S \Omega_{S} \left( 1  + (M-1)  \frac{\pi}{4} Q \right)}{N_0 + \frac{\eta}{N}P_R} \nonumber\\
\gamma_D^{Ra}&=\frac{P_R\Omega_{D} \left (1 + (N-1)\frac{\pi}{4} Q \right )}{N_0}.
\end{align} 
The above rate in \eqref{q3.00} can be maximized by adopting the sub-optimal solutions in Proposition \ref{cor1}. 

\begin{proof}
Please refer to Appendix \ref{app32}. 
\end{proof}
\end{cor}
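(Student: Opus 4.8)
The plan is to mirror the derivation of the static-LoS rate in Theorem~1 (Appendix~\ref{app1}), reusing the input--output relations \eqref{e6} and \eqref{e7} verbatim, since the relay's signal model is unchanged and only the channel statistics differ. The one new ingredient is that, under Rayleigh fading, each power gain $\Omega_{S,m}$ (resp.\ $\Omega_{D,n}$) is no longer the fixed constant $\Omega_{S}$ (resp.\ $\Omega_{D}$) but an i.i.d.\ exponential random variable with mean $\Omega_{S}$ (resp.\ $\Omega_{D}$), while each phase $\phi_{S,m}$ (resp.\ $\phi_{D,n}$) is i.i.d.\ uniform on $[0,2\pi)$ and independent of the gains. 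Accordingly, I would replace the deterministic signal powers of Appendix~\ref{app1} by their expectations over these statistics, leaving the noise and self-interference terms to be inherited unchanged, exactly as computed via Lemma~\ref{lema_1} (the self-interference matrix $\bb G$ and the unit-modulus vectors $\bb u,\bb v$ are untouched).

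For the source--relay link I would start from \eqref{e6} and expand the expected effective channel power
\[
E\Big\{\big|\bb u^T\bb h_S\big|^2\Big\}
=\sum_{m=1}^{M}\sum_{m'=1}^{M}
E\Big\{\sqrt{\Omega_{S,m}\Omega_{S,m'}}\,
e^{j\left[(\phi_{S,m}-\hat\phi_{S,m})-(\phi_{S,m'}-\hat\phi_{S,m'})\right]}\Big\},
\]
separating the $M$ diagonal terms from the $M(M-1)$ off-diagonal terms. Each diagonal term contributes $E\{\Omega_{S,m}\}=\Omega_{S}$. For each off-diagonal term, independence across antennas factorizes the expectation into an amplitude part and a phase part: the amplitude part is $\big(E\{\sqrt{\Omega_{S,m}}\}\big)^2=\tfrac{\pi}{4}\Omega_{S}$, since the Rayleigh amplitude of an exponential-power gain obeys $E\{\sqrt{\Omega_{S,m}}\}=\tfrac{\sqrt{\pi}}{2}\sqrt{\Omega_{S}}$, while the phase part is $\big|E\{e^{j(\phi_{S,m}-\hat\phi_{S,m})}\}\big|^2=Q$, identical to the static case because the $b$-bit quantization error is uniform on $[0,2\pi/2^b)$ and reproduces the same $Q$ of \eqref{eth1.1}. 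Summing yields $E\{|\bb u^T\bb h_S|^2\}=M\Omega_{S}\big(1+(M-1)\tfrac{\pi}{4}Q\big)$; dividing the resulting signal power by the post-combining noise and self-interference powers of Appendix~\ref{app1} produces $\gamma_{SR}^{Ra}$ in \eqref{eq3.001}.

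The source--destination leg is identical in spirit: starting from \eqref{e7}, the same diagonal/off-diagonal split on $E\{|\bb v^T\bb h_D|^2\}$ gives $N\Omega_{D}\big(1+(N-1)\tfrac{\pi}{4}Q\big)$, and since the destination has a single antenna there is no combining gain on the noise, so the factor $1/N$ in \eqref{e7} cancels the leading $N$ and leaves $\gamma_{RD}^{Ra}$. Invoking the decode-and-forward bottleneck, the achievable rate is the minimum of the two per-link rates $\log_2(1+\gamma_{SR}^{Ra})$ and $\log_2(1+\gamma_{RD}^{Ra})$, which is precisely \eqref{q3.00}.

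The only genuinely new step, and the one I expect to be the crux, is the amplitude expectation $E\{\sqrt{\Omega_{S,m}}\}=\tfrac{\sqrt{\pi}}{2}\sqrt{\Omega_{S}}$: it is what turns the coherent-combining factor $(M-1)Q$ of the static case into $(M-1)\tfrac{\pi}{4}Q$. Conceptually, the diagonal (non-coherent) terms still see the full average power $E\{\Omega_{S,m}\}=\Omega_{S}$, whereas the coherent cross terms are governed by $\big(E\{\sqrt{\Omega_{S,m}}\}\big)^2=\tfrac{\pi}{4}\Omega_{S}<\Omega_{S}$, so the random amplitude fluctuations---uncorrectable by phase-only beamforming---shrink the array gain by exactly $\pi/4$. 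Care is needed to keep the diagonal and off-diagonal amplitude expectations distinct: they coincide in the deterministic LoS case but differ here, and conflating them is the natural place to lose the $\pi/4$ factor.
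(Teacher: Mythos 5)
Your proposal is correct and follows essentially the same route as the paper's Appendix~\ref{app32}: the diagonal/off-diagonal split of $E\{|\bb u^T\bb h_S|^2\}$, the Rayleigh mean-amplitude identity $E\{\alpha_m\}=\sqrt{\pi\Omega_S/4}$ producing the $\tfrac{\pi}{4}$ shrinkage of the coherent cross terms, the unchanged quantization factor $Q$, the inherited noise and self-interference powers, and the decode-and-forward minimum. One small point: inheriting the self-interference term from Appendix~\ref{app1} unchanged gives the denominator $N_0+\tfrac{\eta}{M}P_R$ exactly as in Theorem~1, so the $\tfrac{\eta}{N}$ appearing in \eqref{eq3.001} does not follow from your (or the paper's own) derivation and appears to be a typo in the corollary statement rather than a gap in your argument.
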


In Rayleigh fading, the achievable data rate of the considered RIS-assisted communications system is given by \cite[Eq. (34)]{idk7001}
\begin{align}\label{idk1000}
R_{RIS}^{\textrm Ra} = \log_2 \left(1 + \frac{P_T\Omega_{S}\Omega_{R}K}{N_0}  \times \Big ( 1+(K-1) \Big ( \frac{\pi}{4} \Big )^2 \Big) \right). \notag \\
\end{align}

From \eqref{eq_r_irs} and \eqref{idk1000}, we can see that the Rayleigh fading  reduces the signal-to-noise ratio (SNR) by the factor of $\pi/4$ with respect to the case of static fading.

\subsection{Half-Duplex Relay}

 If instead of the FD mode, the relay works in the HD mode and uses all of its $K$ antennas both for transmission and reception in different time slots, the achievable data rate of the corresponding HD relaying system is given by
\begin{align}\label{e34}
 &R_{HD}=\frac{\log_2\left(1+  \gamma^{HD}_S\right)\log_2\left(1+ \gamma^{HD}_D\right)}{\log_2\left(1+  \gamma^{HD}_S\right)+\log_2\left(1+ \gamma^{HD}_D\right)},
\end{align}
where
\begin{align}
    \gamma^{HD}_S&= \frac{P_T \Omega_{S} \left(1+(K-1)Q\right)}{N_0}\nonumber\\
    \gamma^{HD}_D&=\frac{P_T \Omega_{D} \left( 1+(K-1)Q \right)}{N_0}
\end{align}
Note, the proof is straightforward using the vast available literature on HD relaying, e.g. \cite{reff4}. 

In the numerical results, we also present the achievable rate of a communication system assisted by a HD relay.

\subsection{Energy Efficiency}\label{sec-EE}
In this subsection,  we derive the energy efficiency of the relay-assisted communication system  defined by 
\begin{align} \label{rev2_eq1} 
    EE = \frac{ R}{P_{total}} , 
\end{align} 
where $R$ is the achievable data rate  and $P_{total}$ is the total power consumption of the considered system. 

For completeness, we first provide the  energy efficiency of the   RIS-assisted communications system. The total power consumption of the RIS-assisted system has been derived in \cite{8741198, 9548940, 9257429}
\begin{align} \label{rev2_eq2} 
    P_{total} = \xi P_S + p_c^{(S)} + K p_0^{(RIS)} + p_c^{(RIS)} + p_c^{(D)} , 
\end{align} 
where $\xi$ is the efficiency of the transmit power amplifier, $P_S$ is the RF transmit power of the source (which in the RIS case is $P_S=P_T$), $K$ is the number of the RIS reflecting elements, $p_0^{(RIS)}$ is the power consumption of each RIS element, whereas $p_c^{(S)}$, $p_c^{(RIS)}$, and $p_c^{(D)}$ are the hardware static power consumptions of the source, the RIS, and the destination, respectively. Now, inserting \eqref{rev2_eq2} and $R_{RIS}$ from \eqref{eq_r_irs} instead of $R$ into \eqref{rev2_eq1}, we obtain the energy efficiency of the RIS system.

Analogously  to the derivation of \eqref{rev2_eq2}, the total power consumption of the relaying system is obtained as
\begin{align} \label{rev2_eq3} 
    P_{total}& = \xi P_S + p_c^{(S)} + M p_0^{(relay)} + p_c^{(relay)} + N p_0^{(relay)} \notag \\ 
    &+ \xi P_R + p_c^{(D)} \notag \\ 
    &= \xi \left(P_S + P_R\right) + p_c^{(S)} + \left(M + N \right) p_0^{(relay)} 
    \notag \\ 
    &+ p_c^{(relay)} + p_c^{(D)},  
\end{align} 
where $P_R$ is the RF transmit power of the relay, $p_c^{(relay)}$ is the relay's hardware static power consumption, $p_0^{(relay)}$ is the power consumption of each antenna element at the relay, whereas $M$ and $N$ are the number of receive and transmit antennas. Note that for the relaying system, $P_S + P_R = P_T$ and $K = M + N$. As a result, total power consumption of the relaying system is obtained as
\begin{align} \label{rev2_eq4} 
    P_{total} = \xi P_T + p_c^{(S)} + K p_0^{(relay)} + p_c^{(relay)} + p_c^{(D)}.
\end{align}  

If the HD relay is employed instead of the FD relay, then $p_c^{(S)}$ in (\ref{rev2_eq4}) should be replaced by $p_c^{(S)}/2$ since the source is active only half of the time.

\section{Numerical results}\label{sec-nr}

For a fair comparison between the RIS-assisted system and the relay-assisted system, we adopt the planar array structure considered in \cite{bib19} and thereby assume the area of each transmit/receive antenna element at the relay and each reflecting element at the RIS, denoted by $A$, satisfies $A \leq (\lambda/4)^2$, where $\lambda$ is the signal's carrier wavelength.

 Let $d_{S}$ denote the distances between the source and the RIS/Relay center and $d_D$ denote the distance between the RIS/Relay center and the destination. According to \cite[Eq. (11)]{bib19}, the far-field assumption is justified if the conditions $\sqrt{KA} \leq 3d_S$ and $\sqrt{KA} \leq 3d_D$ are met. In this case, $\Omega_{S,k}$ and $\Omega_{D,k}$ are respectively approximated by \cite[Eqs. (11) and  (31)]{bib19}  as
\begin{equation} \label{eqnew1}
    \Omega_{S,k} \approx \frac{A \cos(\alpha_S)}{4\pi d_S^2} = \Omega_S , \, \forall k,
\end{equation}
\begin{equation} \label{eqnew2}
    \Omega_{D,k} \approx \frac{A \cos(\alpha_D)}{4\pi d_D^2} = \Omega_D , \, \forall k.   
\end{equation}
In (\ref{eqnew1}) and (\ref{eqnew2}), $\alpha_S$ is the angle between the source and RIS/Relay LoS and the RIS/Relay boresight, and $\alpha_D$ is the angle between the destination RIS/Relay LoS and the RIS/Relay boresight.

We set the following system parameters: $\lambda = 0.01$m (i.e., 30 GHz), $A= (\lambda/4)^2$, $d_{S} = d_D= 30$m, $\alpha_{S} = \pi/6$, and $\alpha_{R} = -\pi/6$. The total transmit power is $P_T = 1$W, while the thermal noise power is set to $N_0=10^{-12}$W. We assume that each analog phase shifter at the relay is a 2-bit phase shifter,  which yields $\mathcal{P}=\left\{0,\;\frac{\pi}{2},\; \pi,\;  \frac{3\pi}{2} \right\}$. 


\begin{figure}
\centering
\includegraphics[width=1\linewidth]{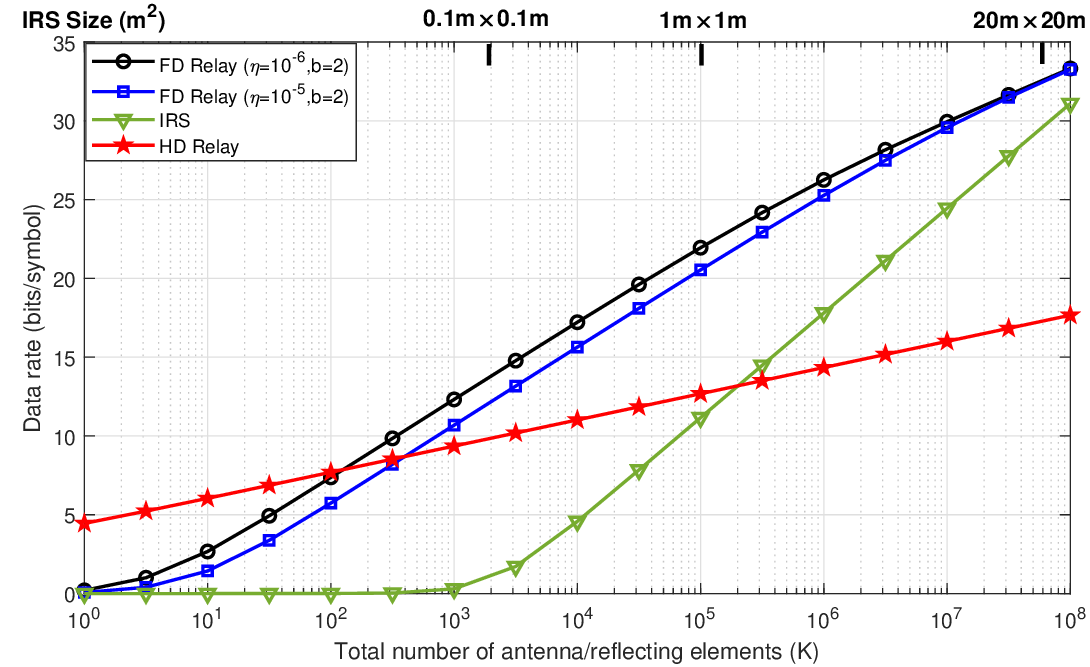} 
\vspace{-4mm}
\caption{Data rate as a function of the number of antenna elements $K$ (LoS).} 
\label{fig1} 
\end{figure}

\begin{figure} 
    \centering
    \includegraphics[width=1\linewidth]{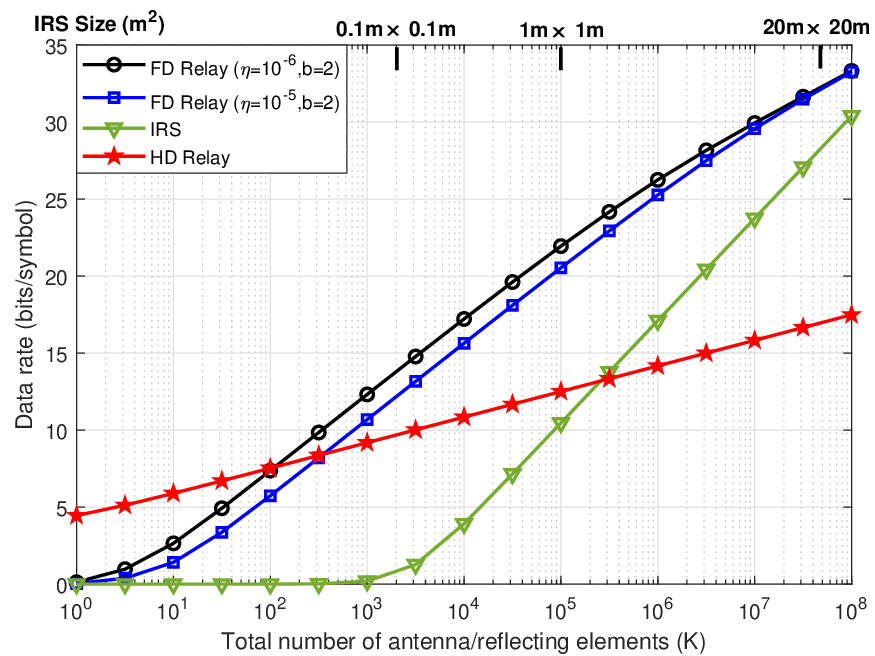}
    \caption{Data rate as a function of the number of antenna elements $K$ (Rayleigh).}
    \label{fig_ray}
\end{figure}

The rates of the three considered systems (RIS, HD Relay, and FD relay) are compared  in Fig.~\ref{fig1} as a function of the total number of RIS/antenna elements $K$ (bottom $x$-axis),  as well as a function of the size of the RIS (top $x$-axis). As can be seen from Fig.~\ref{fig1},    the FD relay significantly outperforms  the  RIS, when the  self-interference  is either $\eta = 10^{-5}$ (i.e., $50$ dB) or $\eta = 10^{-6}$ (i.e., $60$ dB), for any $K<10^8$, i.e., for any RIS size smaller than 20m$\times$20m. Please note that for an RIS of size of 1m$\times$1m, which is an RIS size foreseen to be used in practice, the proposed FD relay achieves almost double the data rate of the RIS.  When $K$ grows to $K=10^8$ or larger, the destination starts to operate in the near field of the relay/RIS, in which case the RIS has a size larger than 20m$\times$20m and therefore it can be considered that the relay/RIS is a part of the source or the destination itself. As a result, in that range,  the RIS does not outperform the FD relay, instead the two have identical performances since  both can be considered as a part of the source or the destination. 

 \begin{figure}
\centering
\includegraphics[width=1\linewidth]{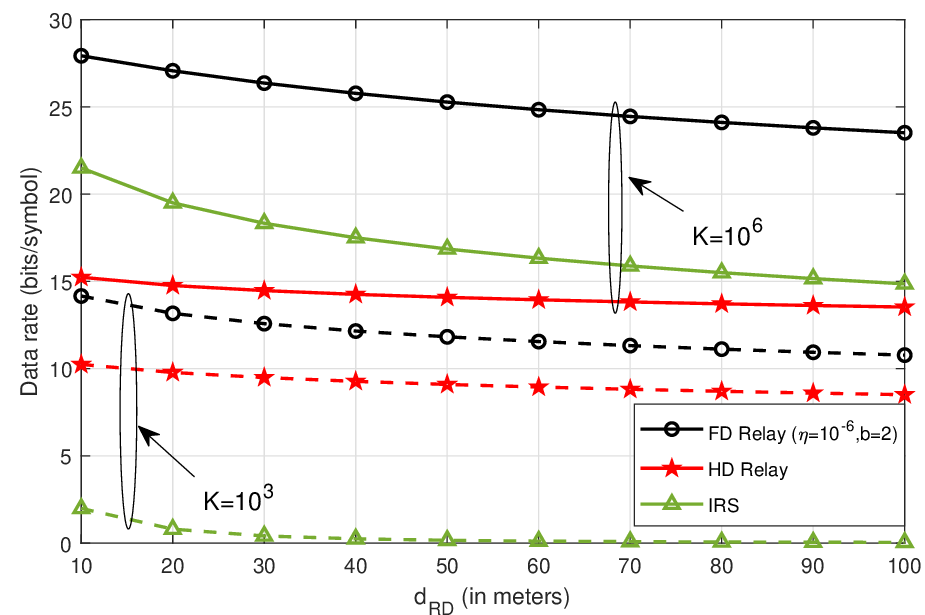} 
\caption{Data rate as a function of the distance $d_{D}$.} 
\label{fig2} 
\vspace{-3mm}
\end{figure}

In Fig.~\ref{fig_ray},  similar to Fig.~\ref{fig1}, the achievable data rates of the proposed FD relay and the RIS are compared under the same assumptions except  instead of a LoS channel a Rayleigh fading channel has been assumed. By comparing Fig.~\ref{fig_ray} and Fig.~\ref{fig1}, we can observe that in the case of Rayleigh fading, the proposed FD relay outperforms the  RIS by a slightly larger margin compared to the case of LoS channel.  This is because by comparing \eqref{eeth1} with \eqref{q3.00} and \eqref{eq_r_irs} with \eqref{idk1000}, it can be seen that  the Rayleigh fading  reduces  the SNR of the relay by a factor of $\pi/4$ and the SNR of the RIS by a factor of $(\pi/4)^2 $. Therefore, in the case of Rayleigh fading channel, the  performance gain of the proposed FD relay compared to the RIS  is even larger.

Fig.~\ref{fig2} compares the data rates of  the three considered systems as a function of   $d_{D}$ when  $d_{S}=25$m  and when the same values as for Fig.~\ref{fig1}  are used for the other parameters. Two sets of curves are presented, one for $K = 10^3$ and the other for $K=10^6$ antenna elements. In both cases, the proposed FD relay significantly outperforms both the RIS and the HD relay.

In Fig.~\ref{fig_1}, we compare the energy efficiencies of the RIS and the relaying systems, given in Sec-\ref{sec-EE}. We assume the relay uses a reconfigurable holographic surface  with $M$ elements at the transmit-side and another reconfigurable holographic surface  with $N$ elements  at the receive-side. For such reconfigurable holographic surface  based antenna implementation at the relay, we consider  that the elements have  power consumption $p_0^{(relay)} = 0.5$mW  based on \cite[Sec-V]{9826717}.
For the RIS, we  consider that the elements have   power consumption  $p_0^{(ris)} = 0.33$mW based on \cite{9548940, 9257429, 9206044}. 
Furthermore, the relay and the RIS are assumed to have equal hardware static power consumptions with $p_c^{(X)} = p_c^{(ris)} = p_c^{(relay)}$, based on \cite{8741198, 9548940, 9257429}. We furthermore set $p_c^{(S)} = p_c^{(D)} = p_c^{(ris)} = p_c^{(relay)} = 1/3$ W, $\xi = 1$ ,  and $P_T = 1$W. 

Fig.~\ref{fig_1}   depicts the energy efficiency versus the number of elements $K$   for the proposed relay built with reconfigurable holographic surfaces and the RIS, respectively. As can be seen from Fig.~\ref{fig_1}, the relay built with reconfigurable holographic surfaces significantly outperforms the RIS in terms of energy efficiency for all values of $K$ shown in the figure.

\begin{figure} 
    \centering
    \includegraphics[width=1\linewidth]{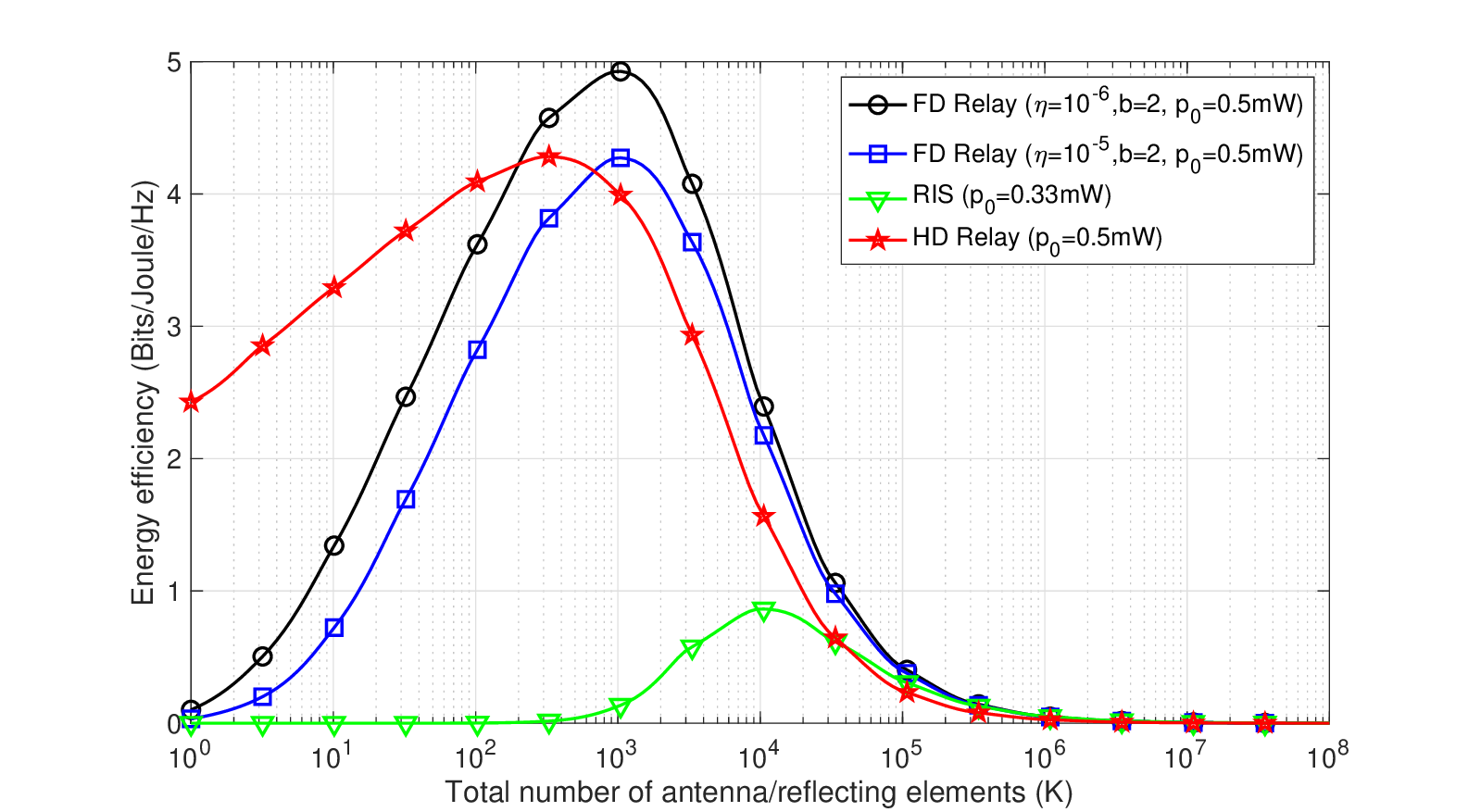} \vspace{-3mm}
    \caption{Energy efficiency as a function of $K$ for Relay  and RIS  (LoS). } \label{fig_1}  
\end{figure}

\section{Conclusion}\label{sec-c}
We proposed to investigate a single RF chain multi-antenna FD relay with discrete analog phase shifters that employs passive self-interference suppression. Next, we  compared the  achievable data  rate   of a system comprised of  a source, the proposed FD relay, and a destination  with the achievable data rate  of the same system but with the FD relay   replaced by an ideal passive RIS. We showed that the FD relaying system with 2-bit quantized analog phase  shifters significantly outperforms the   RIS system. Moreover, we showed that the relay bult with reconfigurable holographic surface significantly  outperforms the RIS in terms of energy efficiency.

In our future work, we plan to pursue a more elaborate
study of the impact of CSI aging as well as a potential design of the proposed FD
relay for multi-user communication and its performance comparison against passive and/or active RIS.

\begin{appendices}

\section{Proof Of Theorem 1}\label{app1}
 
As a result of \eqref{e6},  the signal-to-interference-plus-noise ratio (SINR) at the receive-side of the relay, denoted by  $\gamma_S$,  is given by 
\begin{align}\label{e12}
 \gamma_S=  \frac{E_{x_S}\left\{| x_S  \bb u^T \bb h_S|^2\right\}}{E_{\bb w_R}\left\{\left|\bb u^T \bb w_R \right|^2\right\}+E_{x_R,\bb G}\left\{\left|\sqrt{\frac{1}{N}} x_R  \bb u^T \bb G\bb v\right |^2\right\}}, 
\end{align}
where the subscript of the expectations indicates the random variable(s) with respect to which the expectation is calculated. Now, we have 
$E_{x_S}\left\{| x_S  \bb u^T \bb h_S|^2\right\} = P_S \left| \bb u^T \bb h_S\right|^2$,  which   follows from  $E_{x_S}\left\{|x_S|^2\right\}=P_S$. Next,   when $M$ is large, $ \left| \bb u^T \bb h_S\right|^2$ is given by 
\begin{align}\label{e14}
         &\left| \bb u^T \bb h_S\right|^2 =\left| \sum_{m=1}^{M}\sqrt{\Omega_{S}}e^{j\bar\phi_{S,m}}\right|^2\nonumber\\
    &=M \Omega_{S} + M(M-1) \Omega_{S}  \times\sum_{m=1}^{M}\sum_{\substack{j=1,j\neq m}}^{M}\frac{e^{j\bar\phi_{S,m}}e^{-j\bar\phi_{S,j}}}{M(M-1)}\nonumber\\
    &= M \Omega_{S} + M(M-1) \Omega_{S}  \times E_m\{e^{j\bar\phi_{S,m}}\}E_j\{e^{-j\bar\phi_{S,j}}\}\nonumber\\
    &\overset{(b)}{=} M \Omega_{S}
     + M(M-1) \Omega_{S} \left(\frac{2^b}{\pi}\sin\left(\frac{\pi}{2^b}\right)\right)^2, 
\end{align}
where $\bar \phi_{S,m}=\phi_{S,m}-\hat\phi_{S,m}$, and  $(b)$ follows from the fact that $\bar \phi_{S,m}$ and $\bar\phi_{S,j}$   have uniform distributions over $[-\frac{\pi}{2^b},\frac{\pi}{2^b})$. 
 Moreover, the average power of the self-interference  is given by 
\begin{align}\label{e16}
      E_{x_R,\bb G}\left\{\left|\sqrt{\frac{1}{N}} x_R  \bb u^T \bb G\bb v\right |^2\right\} \overset{(c)}{=} \frac{P_R}{N}E_{\bb G}\left\{\left| \bb u^T \bb G\bb v\right|^2\right\},
\end{align}
where $(c)$ results from  $E_{x_R}\left\{|x_R|^2\right\}=P_R$. On the other hand, $ E_{\bb G}\left\{\left| \bb u^T \bb G\bb v\right|^2\right\}$ is obtained as 
\begin{align}\label{e17}
 &E_{\bb G}\left\{\left| \bb u^T \bb G\bb v\right|^2\right\} =\sum_{m=1}^{M}\sum_{n=1}^{N}E_{\bb G}\{|g_{mn}|^2\} \nonumber\\
 &+\sum_{m=1}^{M}\sum_{n=1}^{N}\sum_{i\neq m}^{M}\sum_{j\neq n}^{N}E_{\bb G}\{g_{mn}\}E_{\bb G}\{g_{ij}\} \nonumber\\
 &\times e^{-j\hat\phi_{S,m}\hat\phi_{S,i}}e^{-j\hat\phi_{D,n}}e^{j\hat\phi_{D,j}}\nonumber\\
&=\sum_{m=1}^{M}\sum_{n=1}^{N}E_{\bb G}\{|g_{mn}|^2\} 
 =MN\sigma_I^2\overset{(d)}{=} N\eta,
\end{align}
where $(d)$ follows when $ \sigma_I^2\leq \frac{\eta}{M}$ holds with equality.
Inserting \eqref{e17} into \eqref{e16}, we obtain \eqref{e16} as 
\begin{align}\label{e17.1}
  &E_{x_R,\bb G}\left\{\left|\sqrt{\frac{1}{N}} x_R  \bb u^T \bb G\bb v\right |^2\right\}=P_R\eta.   
\end{align}
Now,  considering  \eqref{e17.1},  \eqref{e14}, and the fact that $  E_{\bb w_R}\left\{\left|\bb u^T \bb w_R \right|^2\right\}= M N_0$, we obtain $\gamma_S$ in \eqref{e12} as 
\begin{align}\label{e19}
    \gamma_S =\frac{P_S\left(\Omega_{S} + (M-1) \Omega_{S} \left(\frac{2^b}{\pi}\sin\left(\frac{\pi}{2^b}\right)\right)^2\right)}{N_0+\frac{\eta}{M} P_R}.
\end{align}

On the other hand, from \eqref{e7}, the received SNR at the destination, denoted by $\gamma_D$, is given by 
\begin{align}\label{e20}
        \gamma_D=\frac{E_{x_R}\left\{\left|\frac{1}{\sqrt{N}} x_R \bb v^T\bb h_D \right|^2\right\}}{E_{w_D}\{w_D^2\}} \overset{(e)}{=}\frac{P_R\left|\bb v^T\bb h_D\right|^2}{N N_0},
\end{align}
where $(e)$ results from $E\{|x_R^2|\}=P_R$. In \eqref{e20},  $\left|\bb v^T\bb h_D\right|^2$ can be obtained by replacing $N$ with $M$  and $\Omega_{D}$ with $\Omega_{S}$ in \eqref{e14}. As a result,  $\gamma_D$ is given by 
\begin{align}\label{e22}
  \gamma_D= \frac{P_R\left(  \Omega_{D} +(N-1) \Omega_{D} \left(\frac{2^b}{\pi}\sin\left(\frac{\pi}{2^b}\right)\right)^2\right)}{N_0}.
\end{align}
Finally,  the achievable data rate   is given by 
\begin{align}\label{e22.01}
   R=\textrm{min}\left\{\log_2\left(1+\gamma_S\right),\log_2\left(1+\gamma_D\right)\right\},
\end{align}
where $\gamma_S$ and $\gamma_D$ are given in \eqref{e19} and \eqref{e22}, respectively, which leads to \eqref{eeth1}.
 
 \section{Proof Of Proposition 1}\label{app2}
According to \cite[Th1]{reff4}, in order to maximize the relay rate, the following needs to hold 
\begin{align}\label{ee1}
     \frac{P_S\Omega_{S}\left(1+(M-1)Q\right)}{N_0+\frac{\eta}{M} P_R} =\frac{P_R\Omega_{D}\left(1+(N-1)Q\right)}{N_0}
\end{align}
On the other hand, since $M$ and $N$ are large numbers, we adopt the following highly accurate approximations:  $1+(M-1)Q\approx MQ$ and  $1+(N-1)Q\approx NQ$. Hence, \eqref{ee1} transforms into  
\begin{align}\label{ee2}
  \frac{P_S\Omega_{S}MQ}{N_0+\frac{\eta P_R}{M}}=\frac{P_R\Omega_{D}NQ}{N_0}.  
\end{align}
Denoting $a=\frac{\Omega_{S}}{\Omega_{D}}$, and $\eta_0=\frac{\eta}{N_0}$, we obtain the following equality from \eqref{ee2}
\begin{align}\label{ee3}
    \eta_0NP_R^2+MNP_R-aM^2P_S=0.
\end{align}
Next, in order to maximize the  relay SINR,  we form the following optimization problem 
\begin{align}\label{ee4}
    \max_{P_S,P_R,M,N} \quad & P_RN\nonumber\\
\textrm{s.t.} \quad C_1:\quad& \eta_0NP_R^2+MNP_R-aM^2P_S=0\nonumber\\
  \quad C_2:\quad& P_R+P_S=P_T  \nonumber \\
  \quad C_3:\quad& M+N=K.
\end{align}
We use Lagrangian multipliers method to solve \eqref{ee4}. Hence, the Lagrangian function is formed as 
\begin{align}\label{ee5}
   L&= P_RN+\lambda_1\left(\eta_0NP_R^2+MNP_R-aM^2P_S\right)\nonumber\\
   &+\lambda_2\left(P_R+P_S-P_T \right)+\lambda_3\left( M+N-K\right).
\end{align}
We find the derivative of $L$ with respect to $P_R$, $P_S$, $M$, and $N$ and set them to zero as follows
\begin{align}\label{ee6}
    \frac{\mm dL}{\mm dP_R}&=N+2\lambda_1N\eta_0P_R+\lambda_1MN+\lambda_2=0\\\label{ee6.1}
      \frac{\mm dL}{\mm dN}&=P_R+\lambda_1P_R^2\eta_0+MP_R\lambda_1+\lambda_3=0\\\label{ee6.2}
        \frac{\mm dL}{\mm dM}&=\lambda_1NP_R-2\lambda_1aMP_S+\lambda_3=0\\\label{ee6.3}
         \frac{\mm dL}{\mm dP_S}&=-\lambda_1aM^2+\lambda_2=0
\end{align}
Let $a_{11}=N(2\eta_0P_R+M)$, $a_{21}=P_R^2\eta_0+MP_R$, $a_{31}=NP_R-2aMP_S$, and $a_{41}=-aM^2$. Hence, \eqref{ee6}-\eqref{ee6.3} can be written as  
\begin{align}\label{ee7}
    \begin{bmatrix}
     a_{11}&1&0\\
      a_{21}&0&1\\
        a_{31}&0&1\\
         a_{41}&1&0
    \end{bmatrix}\begin{bmatrix}
     \lambda_1\\
     \lambda_2\\
     \lambda_3
    \end{bmatrix}=\begin{bmatrix}
     -N\\
     -P_R\\
     0\\
     0
    \end{bmatrix}.
\end{align}
The system of equations in \eqref{ee7} consists of four equations and three unknowns. Since we assume that there exists a unique solution, the four equations in \eqref{ee7} must be linearly dependent. Thus, the system determinant must be equal to zero. As a result, the following needs to hold 
\begin{align}\label{ee8}
    \begin{vmatrix}
    a_{11}&1&0&-N\\
      a_{21}&0&1&-P_R\\
        a_{31}&0&1&0\\
         a_{41}&1&0&0
    \end{vmatrix} =(a_{11}-a_{41})P_R-(a_{21}-a_{31})N=0,
\end{align}
which results in 
\begin{align}\label{ee9}
    (a_{11}-a_{41})P_R=(a_{21}-a_{31})N
\end{align}
By substituting values of $a_{11}$, $a_{21}$,$a_{31}$, and $a_{41}$ in \eqref{ee9}, the following equality is formed 
\begin{align}\label{ee10}
    \eta_0NP_R^2+(aM^2+N^2+2aMN)P_R -2aMNP_T=0.
\end{align}
On the other hand, by substituting $P_S=P_T-P_R$  into $C_1$ in \eqref{ee4}, obtained from $C_2$, we have
\begin{align}\label{ee11}
    \eta_0NP_R^2+(aM^2+MN)PR-aM^2P_T=0.
\end{align}
By setting 
\begin{align}
  \alpha&= \eta_0N,\quad b_1=aM^2+N^2+2aMN,\nonumber\\
  c_1&=2aMNP_T,\quad b_2=aM^2+MN,\quad c_2=aM^2P_T,
\end{align}
\eqref{ee10} and \eqref{ee11} can be written as 
\begin{align}\label{ee12}
    \alpha P_R^2+b_1P_R-c_1&=0\\\label{ee12.1}
      \alpha P_R^2+b_2P_R-c_2&=0. 
\end{align}
The roots of \eqref{ee12} and \eqref{ee12.1} are given as 
\begin{align}\label{ee13}
P_{R,1}&=\frac{-b_1+\sqrt{
4\alpha c_1}}{2\alpha} \\\label{ee13.1}  
P_{R,2}&=\frac{-b_2+\sqrt{
4\alpha c_2}}{2\alpha},
\end{align}
respectively. If we assume that $b_1^2\ll 4\alpha c_1$ and $b_2^2\ll 4\alpha c_2$ (due to the existence of $\eta_0$ in $\alpha$), then \eqref{ee13} and \eqref{ee13.1} will be given as follows \begin{align}\label{ee14}
   P_{R,1}&=\frac{\sqrt{4\alpha c_1}}{2\alpha}\\\label{ee14.1}
    P_{R,2}&=\frac{\sqrt{4\alpha c_2}}{2\alpha},
\end{align}
respectively. Since \eqref{ee14} and \eqref{ee14.1} need to be equal, we obtain 
    $c_1=c_2$,
which results in 
\begin{align}\label{ee16}
    M=2N.
\end{align}
Considering $C_3$ in \eqref{ee4}, \eqref{ee16} gives us the optimal $M$, denoted by $M^*$, as 
\begin{align}\label{ee17}
    M^*=\frac{2K}{3}.
\end{align}
Substituting \eqref{ee17} in \eqref{ee14}, we obtain the optimal $P_R$, denoted by $P_R^*$, as 
\begin{align}\label{ee18}
    P_R^*=\sqrt{\frac{4K\Omega_{S}N_0P_T}{3\Omega_{D}\eta}}. 
\end{align}
On the other hand, from $C_2$ and $C_3$, we can obtain the optimal values of $N$ and $P_S$, denoted by $N^*$ and $P_S^*$, as
\begin{align}\label{ee19}
    N^*=K-M^*=\frac{K}{3}\\\label{ee19.1}
    P_S^*=P_T-P_R^*. 
\end{align}
 This approximation is valid when $b_1^2\ll 4\alpha c_1$ and $b_2^2\ll 4\alpha c_2$.
 One issue with the optimal result found for $P_R$, given by \eqref{ee18}, is that when $K$ grows to very large values, $P_R^*$ becomes bigger than $P_T$ which cannot happen. In order to solve this problem, we use $M^*=\frac{2K}{3}$ and $N^*=\frac{K}{3}$ as a starting  point to  maximize the rate. As a result, substituting $M^*=\frac{2K}{3}$ and $N^*=\frac{K}{3}$ in \eqref{ee1} leads to the following condition 
 \begin{align}\label{x2}
      \frac{P_S \Omega_{S} \left(1+(\frac{2K}{3}-1)Q\right)}{N_0+\frac{\eta}{M} P_R}  = \frac{P_R \Omega_{D} \left(1+(\frac{K}{3}-1)Q\right)}{N_0}.
 \end{align}
In \eqref{x2}, we introduce the approximations, $1 + (2K/3-1)Q \approx 2KQ/3$ and 
$1 + (K/3-1)Q \approx KQ/3$, which become tight with increasing $K$ and yields 
 \begin{align}\label{x3}
         \frac{P_S \Omega_{S} \left(\frac{2K}{3}Q\right)}{N_0+\frac{\eta}{M} P_R}  = \frac{P_R \Omega_{D} \left(\frac{K}{3}Q\right)}{N_0}.  
 \end{align}
 Now,  by substituting $P_S$ with $P_T-P_R$, and solving the quadratic equation that results from the equality in \eqref{x3} with respect to $P_R$, we obtain $P_R^*$ and $P_S^*$ given in \eqref{x1.1} and \eqref{x1.2}, respectively. 
 
 \section{Proof of Corollary 1}\label{app32}
Let the fading channel between the source \textit{S} and the $m$-th receive antenna of the relay be denoted by $h_{S,m}$, and is assumed to be a complex Gaussian random variable with zero mean and variance $\Omega_{S,m}$, i.e., $ h_{S,m} \sim \mathcal {CN}(0, \Omega_{S, m}) $. Let the fading channel between the $n$-th transmit antenna of the relay and the destination \textit{D} be denoted by $h_{D,n}$, and is assumed to be a complex Gaussian random variable with zero mean and variance $\Omega_{D,n}$, i.e., $ h_{D,n} \sim \mathcal{CN}(0, \Omega_{D, n}) $. Therefore, the fading channels $h_{S,m}$ and $h_{D,n}$ can be expressed as
\begin{align} \label{Hsr}
h_{S_m}  = \alpha_m \, e^{j \phi_{S,m}}, \qquad h_{D,n}  = \beta_n \, e^{j \phi_{D,n}}, 
\end{align}
where $\alpha_m$ and $\beta_n$ are Rayleigh distributed channel gains between the source and the $m$-th receive antenna at the relay, and between and $n$-th transmit antenna at the relay and the destination, respectively.  Hence, we have \begin{equation}\label{idk1}
    f_{\alpha_m}(x) = \frac{2x}{\Omega_{S,m}} \exp \left(-\frac{x^2}{\Omega_{S,m}} \right), \quad x\geq 0  
\end{equation}
\begin{equation}\label{idk2}
    f_{\beta_n}(x)= \frac{2x}{\Omega_{D,n}} \exp \left(-\frac{x^2}{\Omega_{D,n}} \right), \quad x\geq 0.
\end{equation}
As a result of \eqref{idk1} and \eqref{idk2}, $\mathbb{E}\{\alpha_m^2\} = \Omega_{S_m}, \forall m, \mathbb{E}\{\alpha_m\} = \sqrt{\Omega_{S_m} \pi/4}, \forall m$ and $ \mathbb{E}\{\beta_n^2\} = \Omega_{D_n}, \forall n, \mathbb{E}\{\beta_n\} = \sqrt{\Omega_{D_n} \pi/4}, \forall n $. 
Due to the far field assumption, the average gain of the channel between the source and the $m$-th antenna element of the relay's receive array satisfies  
    $\Omega_{S,m} = \Omega_{S}, \forall m.$
 
Similarly, the average gain of the channel between the $n$-th antenna element of the relay's transmit array and the destination satisfies  
   $ \Omega_{D,n} = \Omega_{D}, \forall n.$

From \eqref{e6}, the SINR at the relay, denoted by $\gamma_S^{Ra}$, can be obtained as 
\begin{align} \label{ez2}
    \gamma_S^{Ra} \overset{(f)}{=}  \frac{P_S\mathbb{E}_{ \mathbf{h}_S}\left\{\left |  \mathbf{u}^T \mathbf{h}_S  \right |^2\right\} }{\mathbb{E}_{ \mathbf{w}_R }\left \{ \left |  \mathbf{u}^T \mathbf{w}_R   \right |^2 \right \} + P_R\mathbb{E}_{\mathbf{G}}\left\{\left |  \sqrt{\frac{1}{N}} \mathbf{u}^T \mathbf{G}\mathbf{v}   \right |^2\right\} },
\end{align}
where $(f)$ results from $E_{x_S}\left\{|x_S|^2\right\}=P_S$ and   $E_{x_R}\left\{|x_R|^2\right\}=P_R$. On the other hand, the only term in \eqref{ez2} different from \eqref{e12} is $\mathbb{E}_{ \mathbf{h}_S}\left\{\left |  \mathbf{u}^T \mathbf{h}_S  \right |^2\right\} $ which can be obtained  as
\begin{align} \label{ez3.001}
    \mathbb{E} \left\{\left |  \mathbf{u}^T \mathbf{h}_{S}  \right |^2\right\} & = \mathbb{E}\left\{ \left |\sum_{m=1}^{M}\alpha_{m} e^{j\phi _{S,m}}e^{-j\widehat{\phi} _{S,m}}\right |^2 \right\}
    \nonumber\\
    &= \mathbb{E}\left\{ \left |\sum_{m=1}^{M}\alpha_{m}e^{j\overline{\phi} _{S,m}}\right |^2 \right\}.
    \end{align}
  Now,   \eqref{ez3.001} 
    can be further simplified as 
    \begin{align}\label{ez3}
   & \mathbb{E} \left\{\left |  \mathbf{u}^T \mathbf{h}_{S}  \right |^2\right\} =  \mathbb{E}\left\{ \sum_{m=1}^{M} \alpha_{m}^2\right\} \nonumber\\
    &+ \mathbb{E}\left\{\sum_{m=1}^{M}\sum_{j=1,j\neq m}^{M}\alpha_{m}\alpha_{j} e^{j\overline{\phi} _{S,m}} e^{-j\overline{\phi} _{S,j}}\right\} =   \sum_{m=1}^{M} \mathbb{E}\left\{\alpha_{m}^2\right\} 
     \nonumber\\
     &+\sum_{m=1}^{M}\sum_{j=1,j\neq m}^{M} \mathbb{E}\left\{\alpha_{m}\right\} \times\mathbb{E}\left\{\alpha_{j}\right\}  \mathbb{E}\left\{e^{j\overline{\phi} _{S,m}} \right\} \mathbb{E}\left\{e^{-j\overline{\phi} _{S,j}}\right\}\nonumber\\
     &=M \Omega_{S} \left( 1  + (M-1)  \frac{\pi}{4} Q \right),
\end{align}
where $\bar \phi_{S,m}=\phi_{S_m} - \hat \phi_{S_m}$ is the quantization noise (error) introduced by $b$-bit phase shifters at the relay's receiving antennas with $\mathbb{E}\left \{ e^{j\overline{\phi} _{S,m}} \right \}=\mathbb{E}\left \{ e^{-j\overline{\phi} _{S,m}} \right \}=\frac{2^b}{\pi}\sin\left ( \frac{\pi}{2^b} \right )=\sqrt{Q}$. Hence, $\gamma_S^{Ra}$ is given by 
\begin{align}\label{ez3.1}
    \gamma_S^{Ra}=\frac{P_S \Omega_{S} \left( 1  + (M-1)  \frac{\pi}{4} Q \right)}{N_0 + \frac{\eta}{N}P_R} . 
\end{align}

On the other hand, the received SINR at the destination, denoted by $\gamma_D^{Ra}$, is given by 
\begin{align}\label{ez4}
    \gamma_D^{Ra}=\frac{P_R\left \{ \left | x_R \right |^2 \right \} \mathbb{E}_{\mathbf{h}_{D}}\left\{\left | \frac{1}{\sqrt{N}}\mathbf{v}^T \mathbf{h}_{D}\right | ^2  \right\}}{\mathbb{E}_{w_D} \left \{  w_D^2  \right \}},
\end{align}
where the only term in \eqref{ez4} different from \eqref{e20} is $\mathbb{E}_{\mathbf{h}_D}\left\{\left | \frac{1}{\sqrt{N}}\mathbf{v}^T \mathbf{h}_D\right | ^2  \right\}$.  After similar derivations to \eqref{ez3}, we  obtain $\gamma_D^{Ra}$ as 
\begin{align}\label{ez5}
    \gamma_D^{Ra}= \frac{P_R\Omega_{D} \left (1 + (N-1)\frac{\pi}{4} Q \right )}{N_0}.
\end{align}
As a result of \eqref{ez3.1} and \eqref{ez5}, we can conclude that $R^{Ra}$ is given by \eqref{q3.00}. 
\end{appendices}
\bibliographystyle{IEEEtran}
\bibliography{citations.bib}

\end{document}